\title {How to Fit a Tree in a Box}
\author {Hugo A. Akitaya\inst{1} \and Maarten L\"offler\inst{2} \and Irene Parada\inst{3}}
\institute {
Tufts University, Massachusetts, USA\\
\email{hugo.alves\_akitaya@tufts.edu}
\and
Utrecht University, Utrecht, The Netherlands\\
\email{m.loffler@uu.nl}
\and 
Graz University of Technology, Graz, Austria\\
\email{iparada@ist.tugraz.at}
}
\begin{document}
\maketitle
\begin{abstract}
  We study compact straight-line embeddings of trees. We show that perfect binary trees can be embedded optimally: a tree with $n$ nodes can be drawn on a $\sqrt n$ by $\sqrt n$ grid. We also show that testing whether a given binary tree has an upward embedding with a given combinatorial embedding in a given grid is NP-hard.
\end{abstract}




\section {Introduction}

Let $T = (V,E)$ be combinatorial tree; that is, a connected graph without cycles.
A {\em straight-line embedding} of $T$ onto a grid is an injective map $f : V \to \mathbb{Z}^2$.
An embedding is {\em planar} if for every pair of edges $(v_1,v_2), (w_1,w_2) \in E$ the line segments $f(v_1)f(v_2)$ and $f(w_1)f(w_2)$ do not intersect except at common endpoints.
The {\em size} or {\em dimensions} of an embedding (or, with slight abuse of terminology, the size of the grid) is the width and height of the portion of $\mathbb{Z}^2$ used by $f$; that is,
$$
  \dim_f (T) =
  \left( \max_{v \in V} x_{f(v)} - \min_{v \in V} x_{f(v)} + 1, 
         \max_{v \in V} y_{f(v)} - \min_{v \in V} y_{f(v)} + 1
  \right).
$$
We are interested in finding embeddings with as small a size as possible.

A {\em rooted} tree is a tree $T$ with a special vertex $r \in V$ marked as root.
Because a tree has no cycles, a rooted tree has an induced partial order on its vertices: for two vertices $v, w \in V$, we say $v \prec w$ if and only if $v$ lies on the path from $r$ to $w$.
An embedding is {\em upward} if, for all $v, w \in V$ with $v \prec w$, we have $y_{f(v)} > y_{f(w)}$.
An embedding is {\em weakly upward} if, for all $v, w \in V$ with $v \prec w$, we have $y_{f(v)} \ge y_{f(w)}$.
\bigskip

\noindent\textbf{Related Work.}
Drawing graphs with small area has a long and rich history~\cite{survey_layouts}. 
By now, we are starting to have some understanding of when graphs admit drawings with {\em linear area} (a graph with $n$ nodes can be embedded on a $w \times h$ grid with $wh \in O(n)$), and when superlinear area is required.
Chan~\cite{trees_2018} shows that every tree admits a drawing with $n2^{O(\sqrt{\log\log n \log\log\log n})}$ area, improving the long-standing $O(n \log n)$ bound one obtains by a simple divide-and-conquer layout algorithm.

However, not much is known about the exact minimum area requirements for graphs that do admit linear-area drawings.
It is clear that not every tree admits a perfect drawing on a grid with exactly $n$ points: for instance, when the graph is a star, some grid points are ``blocked'' and cannot be used.
The star graph {\em can} be drawn on a linear-area grid: Euler already showed that the fraction of points visible from the center of a square grid tends to $\frac 6 {\pi^2}$ more than 300 years ago~\cite{euler}.
For graphs of bounded degree, there is hope that we can do better. Clearly, every path admits a perfect drawing.
Garg and Rusu~\cite{gen_linear_area,bin_linear_area} show that trees of degree $d = O(n^{\delta})$ with $\delta < 1/2$, and in particular of degree 3, have linear-area drawings onto a square grid, and even onto grids of different aspect ratio; their main concern is studying the relation between the aspect ratio and the area, but they do not give concrete bounds on the constant factor.

We conjecture that every degree $3$ tree admits a {\em perfect} drawing onto a square grid, and we prove here that this is the case for perfect binary trees.


\medskip

When drawing rooted trees, a natural restriction is to require drawings to be {\em upward}.
In this case, clearly, perfect drawings are impossible unless the tree is a path, but we may still investigate almost-perfect drawings that leave only few grid points unused.
Chan~\cite{trees_2018} shows that for strictly upward drawings, we cannot do better than $\Theta (n \log n)$ area. He does give an improved bound for {\em weakly} upward drawings.

Biedl and Mondal~\cite{upward_trees} proved NP-hardness for strictly upward unordered straight-line high-degree trees.
Later, Biedl~\cite{opt_tree_2017} gave an algorithm to find for every ternary tree $T$ a strictly upward order-preserving straight-line grid drawing of optimum width. 

\bigskip

\noindent\textbf{Contribution.}
  We have the following results.
  \begin {itemize}
    \item It is NP-hard to test whether binary trees with fixed combinatorial embedding  admit upward drawings on a given grid.
    \item Perfect binary trees with $n$ vertices admit drawings on a $\sqrt n \times \sqrt n$ grid.
  \end {itemize}

\section {Optimal embeddings of perfect binary trees}

We consider the following setting. Given a $\sqrt n \times \sqrt n$ grid and a tree with $n$ vertices, can we draw it with straight non-crossing edges?
Clearly this is not always possible, for instance if the tree is a star.

\begin {conjecture}
  If the tree has max degree 3, it is always possible.
\end {conjecture}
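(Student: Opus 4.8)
The plan is to prove the conjecture by strong induction, but on a generalized statement that replaces the square by a whole family of grid regions. Concretely, I would root $T$ at a leaf so that every node has at most two children, and I would strengthen the claim to: for every region $R$ in a suitable class $\mathcal{R}$ of near-square grid regions with exactly $m$ points, every max-degree-$3$ tree on $m$ vertices with a designated vertex $v_0$, and every designated port $p$ on the boundary of $R$, there is a planar straight-line embedding of $T$ onto the $m$ points of $R$ with $f(v_0)=p$. The square $\sqrt n\times\sqrt n$ is the special case $m=n$, and the port/designated-vertex bookkeeping is what lets the recursive pieces be glued together without crossings.

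The recursive step uses a balanced \emph{edge} separator. Taking a centroid $c$ (each component of $T-c$ has at most $n/2$ vertices) and the edge $e=(u,v)$ from $c$ into its largest component, the removal of $e$ splits $T$ into $T_u$ and $T_v$ with $|T_u|,|T_v|\in[\lfloor n/3\rfloor,\lceil 2n/3\rceil]$; crucially this works precisely because the degree bound forces at most three components at $c$. I would then cut the \emph{longer} side of $R$ into two sub-regions $R_u,R_v$ of exactly $a=|T_u|$ and $b=|T_v|$ points, place the ports for $u$ and $v$ at the two grid points straddling the cut, recurse to embed $T_u$ in $R_u$ with root $u$ at its port and $T_v$ in $R_v$ with root $v$ at its port, and finally draw $e$ as the short segment between the two ports. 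Since $R_u$ and $R_v$ are interior-disjoint, no edge of one subtree can cross an edge of the other, and the segment $e$ lies on the shared boundary and meets nothing; thus planarity reduces to the two recursive calls.

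The technical heart is the design of the region class $\mathcal{R}$ and the exact-area cut. A straight cut at an integer column would only realize areas that are multiples of the short side, so in general I must let the dividing boundary run one step deep into a single row, producing an L-shaped or \emph{single-notch staircase} region; writing $a=kS+r$ with $S$ the short side, $R_u$ consists of $k$ full columns plus $r$ cells of the next column. The induction must therefore be carried out over staircase regions, and the core lemma is that $\mathcal{R}$ is closed under this operation while keeping (i) bounded aspect ratio, guaranteed because cutting the long side in a ratio $a/b\in[\tfrac12,2]$ keeps both children near-square, and (ii) a boundary port adjacent to the next cut, so that the following level can again place its separator endpoints on a clean edge. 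Degenerate leaves of the recursion, where a part is a path or the region is a single strip, are handled directly.

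I expect the main obstacle to be exactly this simultaneous maintenance of all invariants through the staircase accounting: one must ensure that the single-row notch never accumulates over the $O(\log n)$ levels into a shape that is no longer near-square or no longer orthogonally convex, and that a usable port always survives on the shared boundary no matter the residue $r$. Verifying non-crossing near the notch, where recursive edges approach the re-entrant corner, is the delicate planarity point. A tempting shortcut is to order the vertices along a Hilbert or boustrophedon space-filling curve, which trivially uses all $n$ points; but I do not expect this to yield planarity, since tree-adjacent vertices can land arbitrarily far apart along the curve and their straight segments then cross, so the curve route seems to trade the area difficulty for an equally hard crossing-control problem, and the inductive scheme above looks more promising.
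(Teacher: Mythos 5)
You should first note that the paper does \emph{not} prove this statement: it is left as an open conjecture, and the paper establishes only the special case of perfect binary trees (Theorem~\ref{thm:bintree}), via an explicit recursive construction with two tile types $F_k$ and $G_k$ whose correctness hinges on carefully chosen invariants (root at one fixed point, an empty central vertical strip, and a single prescribed unused grid point). So your proposal cannot be measured against a paper proof; it has to stand alone as a proof of the general conjecture, and as written it does not.

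The decisive gap is in the planarity step. The claim that ``since $R_u$ and $R_v$ are interior-disjoint, no edge of one subtree can cross an edge of the other'' is only valid if each recursive drawing is contained in its own region, and for non-convex regions this containment fails by default: a straight segment between two points of a notched staircase $R_u$ can leave $R_u$ and cut through $R_v$. The inductive hypothesis would therefore need the additional invariant ``all edges lie inside $R$,'' which is a severe restriction precisely because every point of $R$ must be used --- the points at the re-entrant corner must connect to the rest of the tree, and those edges are exactly the ones prone to escaping. You flag this as ``the delicate planarity point,'' but it is not a verification detail; it is the unproven core. It also interacts with your second unproven claim, closure of the region class: an exact-area cut of a single-notch staircase generically produces a child with two notches, so over $\Theta(\log n)$ levels the boundary complexity grows unless a re-normalization lemma is proved, and none is sketched. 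Finally, the strengthened hypothesis itself (every near-square staircase region, every designated vertex, every boundary port) is quantified far more strongly than anything known; tellingly, even in the maximally symmetric case the paper cannot afford to place the root at an arbitrary boundary port, and instead pins it to one specific interior point while reserving an empty strip to route the connecting edges without crossings. Your plan is a sensible research program in the divide-and-conquer tradition of Garg and Rusu, but none of the three load-bearing lemmas (containment, class closure, port universality) is established, so the conjecture remains open under your approach as well.
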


In particular, if $n=2^{k+1}$, a perfect binary tree of odd height $k$ with additional parent of the root (to make the number of vertices exactly $n$) can be drawn on the $\sqrt n \times \sqrt n$ grid.
We use a recursive strategy to show it.
Similar approaches recursively embedding trees have been previously used to show 
asymptotic bounds (but disregarding smaller order terms);
in particular to prove that
perfect binary trees and Fibonacci trees can be upward drawn in linear area~\cite{up_linear_1992}
and 
to bound the area of complete ternary and 7-ary trees on the 8-grid~\cite{k-grids}.

\begin {theorem} \label {thm:bintree}
  The perfect binary tree on $n=2^{k+1}-1$ vertices with $k$ odd can be embedded in the $\sqrt n \times \sqrt n$ square  grid.
\end {theorem}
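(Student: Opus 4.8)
The plan is to prove the statement by induction on the odd height $k$, decreasing by two at each step so that the side length of the grid exactly halves. Write $m=(k+1)/2$, so the grid is $2^m\times 2^m$ and has $2^{k+1}=n+1$ points, exactly one more than the number of vertices. For the inductive step I peel off the top two levels of the tree: the root $r$, its two children $a$ and $b$, and the four perfect subtrees of height $k-2$ rooted at the grandchildren ($a$'s two children and $b$'s two children). I split the grid into four congruent quadrants, each a $2^{m-1}\times 2^{m-1}$ grid, and embed one subtree per quadrant by the induction hypothesis, possibly composing the recursive drawing with a reflection or rotation of the square. The base case $k=1$ is the three-vertex tree drawn as an ``L'' in the $2\times 2$ grid.

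The counting works out exactly: each subtree has $2^{k-1}-1$ vertices and sits in a quadrant of $2^{k-1}$ points, so each quadrant leaves precisely one point unused; the four quadrants thus free four points, into three of which I place the connector vertices $r,a,b$, leaving one point globally unused, as required. What remains is to route the six connecting edges---$r$ to $a$ and $b$, and each of $a,b$ to the roots of its two subtrees---as straight, pairwise non-crossing segments that also avoid the already-drawn subtrees. For this I will maintain an invariant on each recursive drawing that pins down (i) the exact location of the subtree's root and (ii) the exact location of its single unused point, chosen so that every subtree leaves a clear straight corridor from its root out toward the centre of its quadrant, and so that the freed points line up to receive $r,a,b$. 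Orienting the four quadrant drawings by the symmetries of the square lets me aim these corridors inward, producing an ``H''-shaped bundle of connecting edges in the central seams of the grid, in the spirit of the classical H-layout.

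The main obstacle is the design of this invariant, and it is genuinely delicate for two reasons. First, the invariant must be self-consistent across scales: the drawing produced by the step must itself satisfy the same conditions on its root and its unused point that the step assumed of its four inputs, so that the induction closes; this couples the placement of the new root $r$ to the positions the subtrees expect, and naive choices (for instance pushing all four subtree roots to the outer corners) route cleanly but strand the new root in the centre, breaking the invariant at the next level. Second, because $2^m$ is even, the line separating two quadrants falls strictly between two lattice rows (or columns), so a connector cannot be placed on the dividing line and must be nudged into one quadrant, consuming that quadrant's free point and perturbing both the available geometry and the slopes of the incident edges. I expect to resolve these by a small case analysis on the orientations of the four quadrants together with an explicit, verified placement of $r,a,b$ for the step, after which checking that the six new segments are straight, mutually non-crossing, and disjoint from the subtrees is a routine (if tedious) verification; the creative part is pinning down the invariant so that all of this goes through simultaneously.
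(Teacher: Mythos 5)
Your skeleton coincides with the paper's: recursion from $k$ to $k-2$, four height-$(k-2)$ subtrees in the four quadrants of the $2^{(k+1)/2}\times 2^{(k+1)/2}$ grid, the count of $4\cdot 2^{k-1}-4\,(2^{k-1}-1)=4$ freed points of which $r,a,b$ consume three, the $2\times 2$ ``L'' base case, and the observation that the seam between quadrants forces connectors into quadrant interiors. But the write-up stops exactly where the theorem's only non-routine content begins: you explicitly defer ``pinning down the invariant,'' and that invariant is the proof. Worse, the natural reading of your plan --- a \emph{single} drawing type, reused in all four quadrants up to the symmetries of the square --- does not close. In the geometry you describe (and the paper uses), the new root must sit just below the horizontal seam next to the vertical centreline, i.e., at a \emph{corner} of the bottom-right quadrant, while the connectors $a,b$ must sit just above the seam at the horizontal centres of the two halves, i.e., at \emph{side-interior} points of the top quadrants (at the mouths of those subtrees' corridors). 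Symmetries of the square preserve whether a tile's unused point is a corner or a side-interior point, so one tile type cannot supply both kinds of freed points; rotations, which you invoke but the paper never needs, do not help.

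The paper's missing ingredient is a pair of \emph{mutually recursive} drawing families $F_k$ and $G_k$ of $T_k$: both place the root at $(2^{(k-1)/2}+1,\,2^{(k-1)/2})$ and both keep the vertical strip between $x=2^{(k-1)/2}$ and $x=2^{(k-1)/2}+1$ free of all edges except those incident to the root, but $F_k$ leaves the point $(2^{(k-1)/2},1)$ unused (bottom of the corridor) while $G_k$ leaves the corner $(1,1)$ unused. Each of $F_k$ and $G_k$ is then assembled from two (possibly mirrored) copies of $F_{k-2}$ in the top quadrants --- whose freed corridor points, lying just above the seam, receive $a$ and $b$ --- and two copies of $G_{k-2}$ in the bottom quadrants --- one freed corner receiving the new root just below the seam, the other becoming the unused point required of the new tile. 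Note also that your corridor condition is too weak: you ask for a corridor ``from its root out toward the centre of its quadrant,'' but the edges from $a$ and $b$ down to the roots of the bottom subtrees traverse the \emph{entire height} of those tiles, so the strip must be empty through the whole tile, not just near the root. Until you either exhibit a two-type invariant of this kind, or redesign the placement of $r,a,b$ so that a one-type invariant genuinely closes, the final ``routine (if tedious) verification'' you promise has no construction to verify.
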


\begin{proof}
  We will recursively argue that perfect binary trees can be embedded in square grids in two ways. Let $T_k$ be the perfect binary tree on $n=2^{k+1}-1$ vertices.
  We will recursively define two straight-line crossing-free drawings, $F_k$ and $G_k$, of $T_k$. 
  The vertices in these drawings are placed in the grid points $\{(x,y) \in \mathbb{Z}^2:\ 1\leq x \leq 2^{(k+1)/2}, \ 1\leq y \leq 2^{(k+1)/2} \}$.
  
  We first list the required properties of $F_k$ and $G_k$, also illustrated in Fig.~\ref{fig:FG}:
\begin{enumerate}[(i)]
  \item both $F_k$ and $G_k$ map the root of $T_k$ to the point $(2^{(k-1)/2} + 1, 2^{(k-1)/2})$;
  \item both $F_k$ and $G_k$ do not place any edges in the vertical strip between $x=2^{(k-1)/2}$ and $x=2^{(k-1)/2} + 1$, except for the edges incident to the root of $T_k$; 
  \item $F_k$ leaves the point $(2^{(k-1)/2}, 1)$ unused; and
  \item $G_k$ leaves the point $(1, 1)$ unused.
\end{enumerate}
  
\begin{figure*}[tb]
	\centering 
	\begin{subfigure}[b]{\textwidth}
		\centering
		\includegraphics[width=\linewidth]{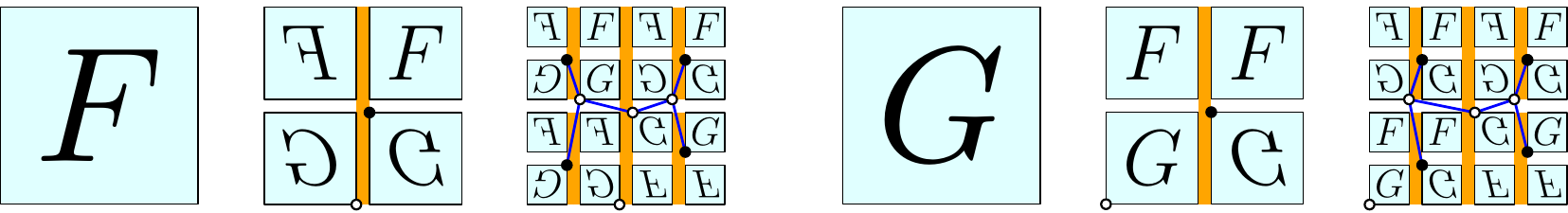}
		\caption{Tiles $F$ and $G$, and their recursive definition.}
		\label{fig:FG}
	\end{subfigure}
    \bigskip
	
	\begin{subfigure}[b]{\textwidth}
		\centering
		\includegraphics[width=\linewidth]{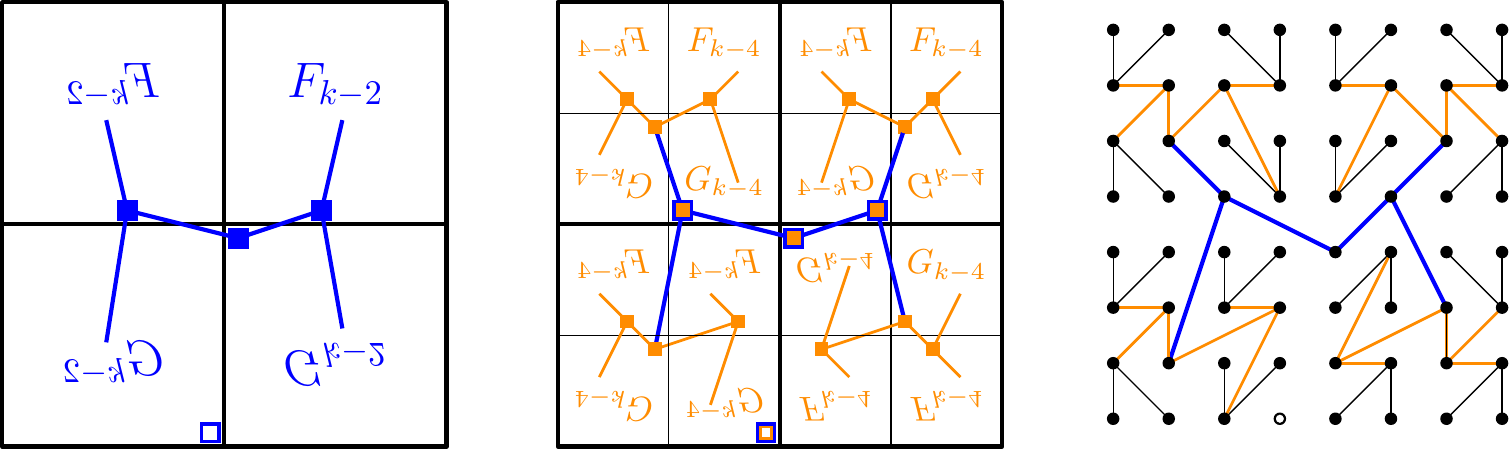}
		\caption{Right and center: recursive definition of $F_k$. Left: $F_5$.}
		\label{fig:Fk}
	\end{subfigure}
	\caption{Recursive embedding of perfect binary trees.}\label{fig:complete_trees}
\end{figure*}
  
  Observe that $F_1 = G_1$ is trivial to draw: both are drawings of a path of length $2$, drawn by connecting the point $(1,2)$ to the point $(2,1)$ to the point $(2,2)$.
  
  What remains is to argue that we can recursively draw $F_k$ and $G_k$ using drawings of $F_{k-2}$ and $G_{k-2}$. 
  The argument is illustrated in Fig.~\ref{fig:complete_trees}; the full proof can be found in Appendix~\ref{app:FG}.  \qed
\end{proof}

\section {Upward embedding of trees in a given grid is NP-Hard}
\label{sec:hardness}

Recall that an embedding of a rooted tree is \emph{upward} if the $y$-coordinate of a node is strictly greater than the $y$-coordinate of its children.
A \emph{combinatorial embedding} is given by a circular order of incident edges around each vertex.
In this section we show that deciding if a rooted binary tree with a fixed combinatorial embedding can be drawn upward and without crossings in a given square grid is NP-complete. 

\begin {theorem} \label {thm:upward}
  Deciding whether an upward planar straight-line drawing of a fixed combinatorial embedding of a rooted binary tree on a grid of given size $(w \times h)$ exists is NP-complete.
\end {theorem}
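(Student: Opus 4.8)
The plan is to establish the two directions of NP-completeness separately, with the membership part being routine and the hardness part resting on a reduction from a strongly NP-hard packing problem. Membership in NP is easy: a certificate is simply the placement $f : V \to \{1,\dots,w\} \times \{1,\dots,h\}$. Since every coordinate is bounded by $\max(w,h)$, the certificate has polynomial size, and one checks in polynomial time that $f$ is injective, that it realizes the prescribed rotation system at every vertex, that $y_{f(v)} > y_{f(w)}$ whenever $v$ is the parent of $w$, and that no two edges cross (a quadratic number of segment--segment tests). Hence the problem lies in NP.

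For hardness I would reduce from \textsc{3-Partition}: given $3m$ integers $a_1,\dots,a_{3m}$ with $B/4 < a_i < B/2$ and $\sum_i a_i = mB$, decide whether they can be split into $m$ triples each summing to $B$. As \textsc{3-Partition} is strongly NP-hard, I may assume the $a_i$ are polynomially bounded, which keeps the constructed grid of polynomial size. A key design principle is that the reduction must activate \emph{both} the width bound $w$ and the height bound $h$ simultaneously: Biedl~\cite{opt_tree_2017} shows that minimizing width alone (with unbounded height) is already polynomial for ternary trees, so the hardness can only come from the interaction between a tight height budget and a tight width budget.

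The tree would consist of a rigid \emph{skeleton} together with $3m$ \emph{item gadgets}. The skeleton is a caterpillar-like spine whose longest root-to-leaf path has length exactly $h-1$; since any upward drawing needs height at least one more than this path length, the cap $h$ forces every spine vertex onto the unique level equal to its depth, pinning the skeleton up to horizontal shifts, and the fixed combinatorial embedding then fixes its left-to-right structure. The spine carves the grid into $m$ \emph{pockets}, each a sub-rectangle of width $w$ and of height just large enough to host three items. Item $i$ is realized by a bushy subtree that, in any upward planar drawing consistent with the embedding, must occupy horizontal room at least $a_i$ inside a single pocket (its leaves cannot share columns within the vertical room one pocket provides). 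Setting $w$ equal to $B$ plus the fixed skeleton overhead then forces the items of each pocket to have total width at most $B$; since the total over all pockets is $mB$, feasibility becomes equivalent to each pocket being packed to exactly $B$, i.e.\ to a valid 3-partition.

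The completeness direction is then straightforward: a valid partition is turned into a drawing by placing each triple side by side in its pocket and drawing each item gadget in its compact $a_i$-wide form. The delicate direction --- and the main obstacle --- is soundness: I must show that \emph{every} feasible drawing induces a valid partition, which requires proving strong rigidity of the gadgets. Concretely, I would need to argue (a) that the height cap really does pin the skeleton and confine each item to a single pocket, so an item can neither straddle a pocket boundary nor be compressed by borrowing vertical room from a neighbour; (b) that within a pocket the fixed rotation system forces the item subtrees into pairwise disjoint horizontal ranges, so their widths genuinely add; and (c) that no gadget admits an unexpected ``folded'' drawing beating its intended width within the allotted height. Establishing (a)--(c) while keeping every vertex of degree at most three is the technically demanding core of the proof; once the gadgets are certified rigid, the equivalence with \textsc{3-Partition} is immediate.
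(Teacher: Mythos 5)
Your NP-membership argument is fine and matches the paper's. The hardness half, however, is a plan rather than a proof, and the missing part is precisely the part that carries all the difficulty. You reduce from \textsc{3-Partition} (the paper uses 3SAT), but you never actually construct the skeleton or the item gadgets, and you yourself flag points (a)--(c) as ``the technically demanding core.'' As written, none of them follows from anything you have established. The central unproved claim is that a ``bushy'' binary subtree representing item $i$ \emph{must} occupy horizontal room at least $a_i$ inside its pocket. For a degree-3 tree this is false at the level of generality you state it: a binary subtree with $a_i$ leaves has about $2a_i$ vertices, so inside a pocket of height $t$ an area count only forces width roughly $(2a_i-1)/t$, not $a_i$; to get leaves forced into distinct columns you would need all of them pinned to a single row, which in turn requires the pocket height to exactly match the gadget depth ($\approx \log a_i$, so pockets cannot have constant height) and requires proving no ``folded'' drawing exists --- exactly your unproved item (c). Similarly, item (a) is not automatic: in an upward planar drawing nothing intrinsically confines a subtree to a sub-rectangle; ``pockets'' do not exist unless you build barriers out of tree material and prove they cannot be deformed.

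It is instructive to compare with how the paper secures rigidity: the grid dimensions are chosen so that the number of vertices hanging below the top part \emph{exactly equals} the number of available grid points, so in any feasible drawing every grid point below the top rows is occupied (the paper's condition $(\star)$). All rigidity claims --- the frame trees have a unique drawing, each variable gadget has exactly two drawings (encoding \texttt{true}/\texttt{false} via which of two long paths sits one unit above the other), and the literal leaves can only land on three specific rows per clause --- are then derived from $(\star)$ by local exchange arguments. Your 3-Partition design would need an analogous exact-area accounting scheme plus a case analysis certifying that each gadget admits no unintended placement; without specifying the gadgets, that cannot even be checked. So the proposal identifies a plausible alternative source of hardness (and its observation that both the width and height budgets must be simultaneously tight, in view of Biedl's polynomial algorithm for width alone, is a genuinely good sanity check), but the soundness direction --- the equivalence of feasible drawings with valid partitions --- remains an open gap, not a routine verification.
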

\begin{proof}
The problem is in NP since a geometric drawing of a tree with $k$ vertices in the grid can be expressed in $O(k)$ size by assigning vertices to grid points. Checking whether the drawing is an embedding can trivially be done in $O(k^2)$ time by checking pairwise edge crossings. 
Checking whether the drawing preserves the given rotation system takes $O(k^2)$ time and 
checking whether it is upward can be done in $O(k)$ time.

We prove NP-hardness by a reduction from 3SAT which is an NP-complete problem~\cite{karp_1972}.
An instance of 3SAT is given by a set $\{x_1,\ldots,x_n\}$ of $n$ variables and a set $\{c_1,\ldots,c_m\}$ of $m$ clauses.
Each variable can assume one of two values in $\{\texttt{true}, \texttt{false}\}$.
Each clause is defined by 3 literals, i.e., positive or negative copies of a variable.
A clause is \emph{satisfied} if at least one of its literals is \texttt{true}.
The problem 3SAT asks for an assignment from the variables to $\{\texttt{true}, \texttt{false}\}$ that satisfies all clauses.
We give an arbitrary order for the variables and say that $x_i$ appears \emph{before} $x_j$ if $i<j$.
The first (resp., second, resp., third) literal of a clause is the literal (among the 3 literal that define the clause) of the variable that appears first (resp., second, resp., third) in the order assigned to variables.
Given an instance of 3SAT we build a rooted tree with $O(m^2+mn)$ vertices and set $w=4m+4$ and $h=\lceil \lg(4m+4)\rceil+5n+4m+1$.

\begin{figure}[h]
	\centering
	\includegraphics[width=0.75\linewidth]{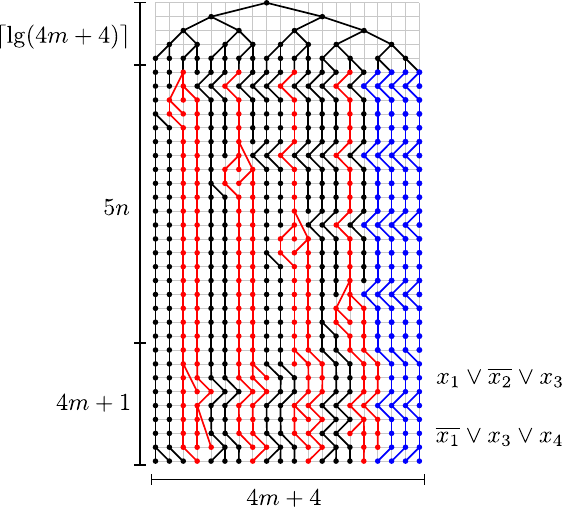}
	\caption{Reduction from 3-SAT.}
	\label{fig:reduction}
\end{figure}

\medskip

\noindent\textbf{Overview.}
Refer to Fig.~\ref{fig:reduction}.
This paragraph gives a brief informal overview of the reduction. 
The following paragraphs will give a full proof.
The reduction is divided into 3 parts. 
The top part (spanning the top $\lceil \lg(4m+4)\rceil$ rows in Fig.~\ref{fig:reduction}) is a perfect binary tree with $2^{\lceil \lg(4m+4)\rceil-1}$ leaves.
The middle part (spanning the next $5n$ rows in Fig.~\ref{sec:hardness}) is where the variables are assigned a boolean value.
The bottom part (spanning the last $4m+1$ rows in Fig.~\ref{sec:hardness}) enforces that every clause of the original instance of 3SAT is satisfied.
Each variable is represented by a red subtree with two long paths that have to span all but one row below the least common ancestor.
The left (resp., right) path represents a positive (resp., negative) literal of the variable.
The construction forces one of the paths to be drawn one unit above the other and that encodes the boolean assignment. 
If the left path does not span the last row, then the variable is set to \texttt{true}. 
The variable is set to \texttt{false} otherwise.
In Fig.~\ref{fig:reduction}, $x_2$ and $x_3$ (resp., $x_1$ and $x_4$) are set to \texttt{true} (resp., \texttt{false}).
The blue subtrees encode the clauses by allowing the rest of the construction to occupy specific extra grid positions.
The incidence of a variable in a clause is encoded by an extra leaf child in one of the paths that represent the incident literal corresponding to the variable.
If none of the incident literals of a clause are set to \texttt{true}, the drawing would require the use of an extra row or column.
Otherwise, the extra leaves can be accommodated exactly by the space provided by the blue subtrees.

\medskip

\noindent\textbf{Construction.}
There are exactly $4m+4$ subtrees attached to the perfect binary tree on the top of the construction.
The fixed combinatorial embedding prescribes a left-to-right order of such subtrees.
For each variable $x_i$, do the following.
Set the $4(i-1)+1$-th subtree to be a path $p$ of length $5n+4m$; attach another path of length $5n+4m-5(i-1)-4$ to the right of the $5(i-1)+4$-th vertex of $p$.
Attach a right child to the second to last vertex of $p$.
Set the $4(i-1)+2$-th subtree to be a path of length $5(i-1)+1$.
Set the $4(i-1)+3$-th subtree to be a path of length $5(i-1)$.
At the end of the path, attach two paths $p_t$ and $p_f$ of length $5(n-i+1)+4m-2$ each  as left and right subtrees respectively. 
Attach a right (resp., left) child to the first vertex of $p_t$ (resp., $p_f$).
We now describe the position of the vertices that encode the incidence of a variable in a clause. 
We call such vertices \emph{literal leaves}.
If $x_i$ (resp., $\overline{x_i}$) is the first or second literal of $c_j$, then add a right child $\ell_{i,j}$ to the $5(n-i+1)+4(m-1)$-th vertex of $p_t$ (resp., $p_f$).
If $x_i$ (resp., $\overline{x_i}$) is the third literal of $c_j$, then add a left child $\ell_{i,j}$ to the $5(n-i+1)+4(m-1)$-th vertex of $p_t$ (resp., $p_f$).
The $4(i-1)+3$-th subtree is shown in red in Fig.~\ref{fig:reduction}.
Set the $4(i-1)+4$-th subtree to be a path of length $5(i-1)$.
Finally, we describe the four last subtrees (shown in blue in Fig.~\ref{fig:reduction}).
Set the $4m+1$-th, $4m+2$-th, and $4m+3$-th subtrees to be paths of length $5n+4m$ each.
For every clause $c_j$, attach a right leaf child to the $5n+4j$-th vertex of the $4m+3$-th subtree.
Set the last subtree to be a path of length $5n$.
For each variable $x_i$, attach a right leaf child to the $5i-4$-th vertex of the path.
This finalizes the construction.
\medskip

\noindent\textbf{Correctness.}
We argue that the construction is correct,
by showing that every satisfiable 3SAT instance can indeed be embedded in a $w \times h$ grid, and that every drawing that fits in a $w \times h$ grid must correspond to a satisfiable 3SAT instance. The full details can be found in Appendix~\ref {app:up}.
\qed
\end{proof}

\section {Conclusions}

We studied tree drawings in small areas.

For arbitrary drawings, we gave a construction for embedding a perfect binary tree on a square grid. The main remaining open question here is whether every low-degree tree with $wh$ vertices (or fewer) can be embedded on an $w \times h$ grid. We conjecture that this is the case when $w=h$.
Another intriguing question is whether, for general trees, testing if they can be embedded on a given grid is computationally tractable.

For upward drawings, we showed that even for bounded-degree trees, testing whether a given tree can be embedded in a $w \times h$ rectangle is already NP-hard, if the combinatorial embedding of the tree is fixed. It would be interesting to know whether the same is true when one can freely choose the combinatorial embedding.
Another question is whether the problem is also NP-hard for {\em weakly} upward drawings, where adjacent vertices may be embedded using the same $y$-coordinate.

\section* {Acknowledgements}
This research was initiated at the 33rd Bellairs Winter Workshop on Computational Geometry in 2018. We would like to thank all participants of the workshop for fruitful discussions on the topic.
H.A.A. was supported by NSF awards CCF-1422311 and CCF-1423615, and the Science Without Borders scholarship program.
M.L. was partially supported by the Netherlands Organisation for Scientific Research (NWO) through grant number 614.001.504.
I.P. was supported by the Austrian Science Fund (FWF) grant W1230.


\appendix

\section {Full Proof of Theorem~\ref {thm:bintree}} \label {app:FG}

  We argue that the construction in the proof of Theorem~\ref {thm:bintree} is correct,
  by showing that we can recursively draw $F_k$ and $G_k$ using drawings of $F_{k-2}$ and $G_{k-2}$. The argument below is illustrated in Fig.~\ref{fig:complete_trees}.

  To draw $F_k$ we place two (possibly mirrored) copies of $F_{k-2}$ and two (possibly mirrored) copies of $G_{k-2}$ 
  in the four quadrants of the $2^{(k+1)/2}$ by $2^{(k+1)/2}$ grid: 
  one identical copy of $F_{k-2}$ in the top right quadrant,  
  one horizontally mirrored copy of $F_{k-2}$ in the top left quadrant, 
  one vertically mirrored copy of $G_{k-2}$ in the bottom right quadrant, 
  and one horizontally mirrored copy of $G_{k-2}$ in the bottom right quadrant. 
  Because, except for edges incident to the respective roots, each copy leaves a vertical strip empty, 
  we can connect the roots of the top right and bottom right subtrees to a new node at $(2^{(k-1)/2} + 2^{(k-3)/2}, 2^{(k-1)/2} + 1)$ (which was left empty by definition of $F$).
  Note that, since the new edges are also incident to the roots of the subtrees, 
  they don't cross with the edges incident to the respective roots that cross the strip.
  Moreover, the new top edge lies completely above the edge crossing the strip in the bottom right quadrant
  and the new bottom edge lies completely below the edge crossing the strip in the top right quadrant.
  Thus, the two new edges are drawn without crossings.
  Similarly, we connect the roots of the top left and bottom left subtrees to a new node at $(2^{(k-3)/2} + 1, 2^{(k-1)/2} + 1)$. 
  Finally, we connect both of these new nodes to the new root of $T_k$, 
  drawn at $(2^{(k-1)/2}+1, 2^{(k-1)/2})$ (which was left empty by definition of $G$), as required. 
  The point $(2^{(k-1)/2}, 1)$ remains unused, 
  and the central vertical strip is empty except for one edge connecting the root of $T_k$ to the left subtree.
  
  To draw $G_k$ we also place two copies of $F_{k-2}$ and two copies of $G_{k-2}$ 
  in the four quadrants of the $2^{(k+1)/2}$ by $2^{(k+1)/2}$ grid:
  one identical copy of $F_{k-2}$ in the top right quadrant,
  another identical copy of $F_{k-2}$ in the top left quadrant, 
  one identical copy of $G_{k-2}$ in the bottom left quadrant,
  and one vertically mirrored copy of $G_{k-2}$ in the bottom right quadrant.
  As in the case of $F$, we can connect the roots of the top right and bottom right subtrees to a new node at $(2^{(k-1)/2} + 2^{(k-3)/2}, 2^{(k-1)/2} + 1)$, 
  and the roots of the top left and bottom left subtrees to a new node at $(2^{(k-3)/2}, 2^{(k-1)/2} + 1)$. 
  We connect the two new nodes to the new root of $G_k$ at $(2^{(k-1)/2} + 1, 2^{(k-1)/2})$. 
  The point $(1, 1)$ remains unused, 
  and the central vertical strip is again empty except for the edge connecting the root of $G_k$ to the left subtree.

\section {Full Proof of Theorem~\ref {thm:upward}} \label {app:up}

We argue that the hardness construction in the proof of Theorem~\ref {thm:upward} is correct,
by showing that every satisfiable 3SAT instance can indeed be embedded in a $w \times h$ grid, and that every drawing that fits in a $w \times h$ grid must correspond to a satisfiable 3SAT instance.
\medskip

\noindent\textbf{Correctness $(\Rightarrow)$.}
First assume that the 3SAT instance has a positive solution.
Then we can upward embed the constructed tree in a $(w\times h)$ grid as follows.
Use the first $\lceil \lg(4m+4)\rceil$ rows to embed the top part of the tree (perfect binary tree).
We describe the embedding by assigning grid points to vertices of the subtrees from left to right.
Let the bottom left grid point be $(0,0)$.
Assign the root of the $s$-th subtree to $(s,5n+4m)$.
We describe the embedding from left to right with $s$ starting with $1$ until $4m+4$.

\paragraph{General rule.}
Apart from the roots of the paths $p_t$ and $p_f$ of the red subtrees, and the literal leaves, recursively assign the left child to the leftmost free grid point in the row immediately below its parent, then assign the right child to the leftmost free grid point in the same row.

\paragraph{Encoding truth assignment.} 
For the $i$-th red subtree (i.e., $s=4i-1$), let $v_i$ be the least common ancestor of $p_t$ and $p_f$.
Denote by $(v_i.x,v_i.y)$ the coordinate of the grid point assigned to $v_i$.
If $x_i$ is assigned \texttt{true} (resp., \texttt{false}) in the solution of the 3SAT instance, respectively assign the roots of $p_t$ and $p_f$ to $(v_i.x,v_i.y-1)$ and $(v_i.x+1,v_i.y-2)$ (resp., $(v_i.x-1,v_i.y-2)$ and $(v_i.x,v_i.y-1)$).

\begin{figure}[b]
	\centering
\includegraphics[width=0.5\linewidth]{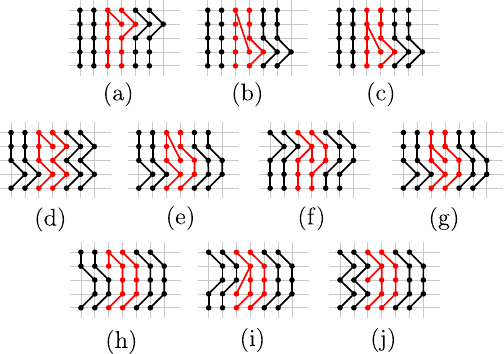}
	\caption{Possible embeddings for the literal leaves of a clause.}
	\label{fig:lit}
\end{figure}

\paragraph{Position of literal leaves $\ell_{i,j}$.}
Let $a$ be the parent of $\ell_{i,j}$.
If $x_i$ or $\overline{x_i}$ is the first literal of a clause $c_j$, we place $\ell_{i,j}$ as follows.
If it is the last (and therefore the only) literal of $c_j$ with \texttt{true} value, assign it to $(a.x+1,a.y-1)$ (see Fig.~\ref{fig:lit}(a)).
Else, if its value is \texttt{true}, assign it to $(a.x+1,a.y-3)$ (see Fig.~\ref{fig:lit}(b)).
Else, assign it to $(a.x+1,a.y-2)$ (see Fig.~\ref{fig:lit}(c)).
If $x_i$ or $\overline{x_i}$ is the second literal of a clause $c_j$, we proceed as follows.
If it is the last \texttt{true} literal of $c_j$, assign $\ell_{i,j}$ to $(a.x+1,a.y-1)$ (see Fig.~\ref{fig:lit}(d)).
Else, if its value is \texttt{true}, assign it to $(a.x+1,a.y-2)$ (see Fig.~\ref{fig:lit} (e)).
If the first literal of $c_j$ was embedded as in Fig.~\ref{fig:lit}(a), then assign $\ell_{i,j}$ to $(a.x,a.y-1)$ (see Fig.~\ref{fig:lit}(f)).
Else, assign $\ell_{i,j}$ to $(a.x+1,a.y-1)$ (see Fig.~\ref{fig:lit}(g)).
If $x_i$ or $\overline{x_i}$ is the third literal of a clause $c_j$, we proceed as follows.
If its value is \texttt{true}, assign $\ell_{i,j}$ to $(a.x,a.y-1)$ (see Fig.~\ref{fig:lit}(h)).
Else, if the second literal of $c_j$ was embedded as in Fig.~\ref{fig:lit}(f) (which happens when the first and second literals are \texttt{true} and \texttt{false} respectively), then assign $\ell_{i,j}$ to $(a.x-1,a.y-2)$ (see Fig.~\ref{fig:lit}(i)).
Else, assign $\ell_{i,j}$ to $(a.x-1,a.y-1)$ (see Fig.~\ref{fig:lit}(j)).
By construction, the blue subtrees will occupy one less grid points in 3 rows of $y$-coordinate $1+4(m-j)$, $2+4(m-j)$ and $3+4(m-j)$ for each $c_j$.
Since the SAT instance has a positive solution, these will be exactly filled by the literal leaves of the corresponding variable gadgets.

\medskip

\noindent\textbf{Correctness $(\Leftarrow)$.}
Now, assume that the constructed tree can be upward embedded in the $(w\times h)$ grid.
We show that the 3SAT instance has a positive solution.

\paragraph{Top part.}
Notice that the number of vertices in the subtrees of the descendants of the top perfect binary tree is $(5n+4m+1)(4m+4)$.
The uppermost $y$-coordinate that each root of a subtree can occupy is $5n+4m$ because each has exactly $\lceil \lg(4m+4)\rceil$ ancestors.
Then, the solution must place all $4m+4$ roots of such subtrees on the row with $y$-coordinate $5n+4m$ and 
$$
(\star) \text{ every point of the grid below this $y$-coordinate must be occupied.}$$

\paragraph{Frame.} The first subtree can only be embedded as shown in Fig.~\ref{sec:hardness}, i.e., each node occupying the uppermost leftmost grid point possible, or else a grid point would not be used contradicting ($\star$).
The same is valid for the blue subtrees as well switching leftmost for rightmost, by ($\star$).
The second tree also has only one possible embedding by ($\star$): if a vertex is mapped to a point that is to the right of its leftmost possible position, then a grid point would be unused; else if it is mapped to a point lower than its uppermost possible position, then the edge between the vertex and its parent would go through a grid point, making it unusable by other vertices. 

\paragraph{Variable assignment.} Let $T_1$ be the first red subtree.
We show that the top part of $T_1$ has only two possible embeddings.
First, note that due to a ``bump" created by a leaf of the last subtree, every grid point $(b, 4m+5n-1)$ for $b\ge 3$ is occupied by a vertex of a tree to the right of $T_1$, or else $(\star)$ would be violated.
Let $v_l$ and $v_r$ be the roots of $p_l$ and $p_r$ respectively (i.e., children of the root of $T_1$).
Then, only one among $v_l$ and $v_r$ can be at $(2, 4m+5n-1)$ and those are the only vertices that can be assigned to that grid point.
If we choose $v_l$ (resp., $v_r$) to be at $(2, 4m+5n-1)$, then $v_r$ (resp., $v_l$) must be at $(1, 4m+5n-2)$ (resp., $(3, 4m+5n-2)$).
The remainder of the embedding of $T_1$ is fixed by $(\star)$ until row $4m$.
Using similar arguments, we can show that, if $(\star)$ is satisfied, once we fix the embedding of the top part of $T_1$, the embeddings of all subtrees are fixed from their roots to row $4m+5n-5$.
We can then apply induction to show that there are only two possible embeddings for the top part of $i$-th red subtree $T_i$ and every non-red subtree has a fixed embedding until row $4m$.

\paragraph{Clause satisfaction.} Once the middle part of the construction is fixed, row $0$ is also fixed because the $4m+3$ leftmost vertices at row $4m$ have $4m+4$ descendant leaves at distance $4m$.
That implies that every vertex on a path of length $4m$ or $4m-1$ in the bottom part of the construction has their $y$-coordinate fixed.
Because of the fixed position of the leaves of the blue trees, the literal leaves of a clause $c_j$ can only occupy the $y$-coordinates $4(m-j)+1$, $4(m-j)+2$, and $4(m-j)+3$. 
Note that a literal leaf can only occupy a $y$-coordinate $4(m-j)+3$ if the corresponding red path was embedded above the other red path of the same variable. 
Since every clause must have a literal leaf at $y$-coordinate $4(m-j)+3$, assigning \texttt{true} (resp., \texttt{false}) to $x_i$ if the corresponding $p_l$ (resp., $p_r$) is embedded above $p_r$ (resp., $p_l$) will result in a satisfying assignment.

\end {document}